\newcommand{\vol}{\mathrm{vol}}
\newcommand{\Ric}[1]{\operatorname{Ric}(#1)}
\newtheorem{theorem}{Theorem}
\newtheorem{remark}{Remark}
\newtheorem{lemma}{Lemma}
\newtheorem{question}{Question} %Added by YH
\newcommand{\dd}{\mathrm{d}}
\newcommand{\ii}{\mathrm{i}}
\title{Harmonic band theory: rigidity of non-zero degree harmonic maps from 2-torus to complex projective space}
\author[a]{Yoshinori Hashimoto}
\author[b,c]{Bruno Mera}
\author[c]{Tomoki Ozawa}
\affiliation[a]{Department of Mathematics, Osaka Metropolitan University, 3-3-138, Sugimoto, Sumiyoshi-ku,
Osaka, 558-8585, Japan}
\affiliation[b]{Instituto de Telecomunica\c{c}\~oes and Department of Mathematics, Instituto Superior T\'ecnico, Universidade de Lisboa, 1049-001 Lisboa, Portugal} 	
\affiliation[c]{Advanced Institute for Materials Research (WPI-AIMR), Tohoku University, Sendai 980-8577, Japan} 
\emailAdd{yhashimoto@omu.ac.jp}
\emailAdd{bruno.mera@gmail.com}
\emailAdd{tomoki.ozawa.d8@tohoku.ac.jp}
\gdef\@fpheader{}
\abstract{We prove the rigidity of isotropic harmonic maps from a 2-torus to a complex projective space, when they are constructed from holomorphic embeddings associated to complete linear systems. We also prove that this rigidity holds for any holomorphic embeddings without special hyperosculation points, with an extra assumption on the pullbacks of Fubini--Study symplectic forms. These results ensure the rigidity of towers of harmonic bands in condensed matter physics.}
\begin{document}
\maketitle
\section{Introduction}
\label{sec: Introduction}
In recent years, condensed matter physics has undergone a paradigm shift: the notion of a band insulator is no longer viewed merely as an energy–crystal momentum graph with a gap between valence and conduction bands. At zero temperature, the ground state is specified by the \emph{(occupied) Bloch vector bundle}: the Hermitian rank-$r$ vector bundle of valence bands, equipped with the Berry connection, over the Brillouin zone $\mathbb{T}^d$, which labels the unitary irreducible representations of the crystal’s translational symmetry. Mathematically, this bundle can be described by a smooth map $f: \mathbb{T}^d \to \mathrm{Gr}_r(\mathbb{C}^{N})$,
where $r$ is the number of valence bands, $N$ is the total number of bands, and the bundle itself is the pullback under $f$ of the tautological bundle over $\mathrm{Gr}_r(\mathbb{C}^N)$.
The first Chern class of its determinant line bundle, represented by the Berry curvature, is essentially the Hall conductivity by the celebrated result of Thouless, Kohmoto, Nightingale, den Nijs, and Wu~\cite{thouless:kohmoto:nightingale:den_Nijs:1982,kohmoto:1985,niu:thouless:wu:1985}.

Beyond topological invariants, geometric quantities such as the \emph{Berry curvature} and the \emph{quantum metric} — together called the \emph{quantum geometry} — have emerged as crucial in flat-band physics, influencing a wider range of phenomena beyond the non-interacting limit, ranging from fractional Chern insulators, to topological superconductivity, to non-linear optical responses, to name a few~\cite{torma:peotta:bernevig:2022,torma:2023,gao:nagaosa:ni:xu:2025}.

In two dimensions, $d=2$, the quantum geometry satisfies an inequality that is saturated by the lowest Landau level and, more generally, by what are now known as \emph{K\"ahler bands}~\cite{ozawa:mera:2021,mera:ozawa:2021,mera:ozawa:2021:engineering}. In the special case of a translation-invariant complex structure (see Sec.~\ref{subsec: Kaehler bands}), these are referred to as \emph{ideal bands}~\cite{ozawa:mera:2021,mera:ozawa:2021,mera:ozawa:2021:engineering,wang:cano:millis:liu:yang:2021,wang:liu:2022,ledwith:vishwanath:khalaf:2022,wang:klevtsov:liu:2023,estienne:regnault:valentin:2023,liu:mera:fujimoto:ozawa:2025}. When $f$ is an immersion, saturation is equivalent to $f$ being holomorphic, which is essentially the mathematical definition of a K\"ahler band. While ideal bands capture lowest Landau level physics, reproducing higher Landau level physics is of interest for realizing non-Abelian topological phases.

A natural generalization of holomorphic maps is given by \emph{harmonic maps}. We focus on maps $f: C \to \mathbb{P}^{N-1}$ from an elliptic curve $C$ (a $2$-torus with a translation-invariant complex structure), where we write $\mathbb{P}^{N-1}$ for the complex projective space $\mathbb{CP}^{N-1}$ throughout this paper. By the theorem of Eells and Wood~\cite{EW}, any such map can be obtained from a holomorphic one by taking derivatives and applying the Gram–Schmidt process. This construction is directly analogous to the action of the Landau-level creation operator, and coincides with it in the case of Landau levels. The higher Landau levels correspond to harmonic maps of non-zero degree that are not holomorphic. The corresponding bands are termed \emph{harmonic bands}\cite{paiva:wang:ozawa:mera:2025, onishi:avdoshkin:fu:2025}, and this notion coincides with the generalized Landau levels introduced in Ref.~\cite{liu:mera:fujimoto:ozawa:2025}.

In Ref.~\cite{liu:mera:fujimoto:ozawa:2025}, harmonic bands were numerically shown to stabilize non-Abelian fractional Chern insulator phases, thereby highlighting their similarity to conventional Landau levels beyond the lowest Landau level. From a physics perspective, classifying harmonic bands is an important problem, as it may reveal whether different harmonic bands can give rise to distinct exotic phenomena. 
%From a physics perspective, classifying harmonic bands is an important problem.
Calabi's rigidity theorem~\cite{Cal} shows that ideal bands---the analogs of the lowest Landau level---are determined, up to projective isometry, by their quantum metric~\cite{wang:klevtsov:liu:2023,advoshkin:popov:2023,mera:ozawa:2024,liu:mera:fujimoto:ozawa:2025,paiva:wang:ozawa:mera:2025}. Here we prove an analogous rigidity for general harmonic bands (Theorem~\ref{thmrgd}).
\section{Physical background}
\label{sec: physical background}
We briefly review the tight-binding formalism in the simplest setting, the emergence of the Bloch Hamiltonian, the map $f:\mathbb{T}^d \to \mathbb{P}^{N-1}$, the notion of K\"ahler bands and also harmonic bands. The aim of this section is to provide physical motivation and context for our work. 

Consider a $d$-dimensional tight-binding model of a band insulator, where fermions hop between lattice sites in a translation-invariant way. In condensed matter, the lattice is $\mathbb{Z}^d$—a lattice in the mathematical sense of a finitely generated free $\mathbb{Z}$-module—or, for finite systems, the discrete torus
\begin{align}
X = \mathbb{Z}^d / (L_1\mathbb{Z} \times \cdots \times L_d\mathbb{Z}) \cong \mathbb{Z}_{L_1} \times \cdots \times \mathbb{Z}_{L_d},
\end{align}
with $L_i$ the system size in the $i$th direction, assumed to be large but finite.

The single-particle Hilbert space $\mathcal{H}$ consists of maps $\psi: \mathbb{Z}^d \to \mathbb{C}^N$ satisfying periodic boundary conditions
\begin{align}
\psi(x_1,\dots,x_i+L_i,\dots,x_d) =\psi(x_1,\dots,x_d), \quad i=1,\dots,d.
\end{align}
Here $\mathbb{C}^N$ accounts for internal degrees of freedom (e.g., spin or sublattice). 

Translation by $\gamma\in\mathbb{Z}^d$ acts as $(\gamma\cdot\psi)(x) = \psi(x+\gamma)$. A translation-invariant tight-binding Hamiltonian $H$ is a Hermitian operator commuting with this action, determined by a hopping function $h:\mathbb{Z}^d \to \mathrm{Mat}_{N\times N}(\mathbb{C})$:
\begin{align}
(H\psi)(x) = \sum_{\gamma\in\mathbb{Z}^d} h(\gamma)\,\psi(x+\gamma),
\end{align}
with $h(\gamma)$ of finite range (or rapid decay).

Fourier decomposition yields
\begin{align}
\mathcal{H} = \bigoplus_k \mathcal{H}(k),
\label{eq: direct sum decomposition of Hilbert space}
\end{align}
where $k=(k_1,\dots, k_d)$, $k_j = m_j/L_j$, $j=1,\dots, d$ and, $\mathcal{H}(k)$ is defined by $\psi\in \mathcal{H}(k)\iff \psi(x+\gamma) = e^{2\pi \ii k\cdot\gamma} \psi(x)$ for any $x,\gamma\in\mathbb{Z}^d$. Observe that this actually implies that $\psi(x)=e^{2\pi \ii k\cdot x}u_k$, for $x$-independent $u_k\in \mathbb{C}^N$. %This is the finite-size lattice version of Bloch's theorem~\cite{}. 
The $k$ occurring here are known as the crystal momenta or quasimomenta and they parametrize the unitary irreducible representations of $\mathbb{Z}^d$. The set of all crystal momenta $k$ is known as the \emph{Brillouin zone} and is naturally a torus $\mathbb{T}^d$.

In this basis, $H$ is block diagonal:
\begin{align}
H = \bigoplus_k h(k),
\end{align}
with $h(k)$ the discrete Fourier transform of the hopping function $\mathbb{Z}^d\ni \gamma\mapsto h(\gamma)\in \mathrm{Mat}_{N\times N}(\mathbb{C})$. For notational convenience, and with a slight abuse of notation, we use the same symbol for the hopping function and its Fourier transform. Under locality assumptions, $h(k)$ is a smooth, $\mathbb{Z}^d$-periodic map to Hermitian $N\times N$ matrices---the \emph{Bloch Hamiltonian}. Its spectrum as a function of $k$ gives the band structure.

The system is a band insulator if the Fermi projector
\begin{align}
P(k) =  \Theta\left(E_F I_N - h(k)\right),
\end{align}
is smooth and periodic in $k$. Here, $\Theta$ is the Heaviside stepfunction, $I_N$ denotes the $N\times N$ identity matrix and $E_F$ is the Fermi energy, which we can set to zero without loss of generality.

The map $f:\mathbb{T}^d \to \mathrm{Gr}_r(\mathbb{C}^N)$ mentioned in the introduction is defined by
\begin{align}
\mathbb{T}^d\ni [k]\longmapsto f([k]) = \mathrm{Im}\; P(k) \subset \mathbb{C}^N
\end{align}
The ground state is obtained by filling all states below the Fermi energy. This corresponds to the unique many-particle fermionic state determined by $P$, namely
\begin{align}
\Psi \in \bigwedge_{k} \left[\bigwedge^r \mathrm{Im}\, P(k)\right]
\subset \bigwedge_{k} \left(\bigwedge^r \mathbb{C}^N\right)
\subset \Lambda^* \mathcal{H},
\label{eq: ground state at boundary condition varphi}
\end{align}
where $\Lambda^* \mathcal{H}$ is the vector space underlying the exterior algebra of $\mathcal{H}$, i.e., the full many-body Hilbert space.

Here $E_{[k]} := \mathrm{Im}\; P(k)$ are the fibers of the rank $r$ occupied Bloch vector bundle $E \to \mathbb{T}^d$, the pullback under $f$ of the tautological vector bundle over the Grassmannian. The fibers of its determinant line bundle $\det(E) \to \mathbb{T}^d$ are $\det(E)_{[k]} := \bigwedge^r \mathrm{Im}\, P(k)$. The ground state is therefore described by the line bundle $\det(E) \to \mathbb{T}^d$. One might argue that the ground state does not depend on the smooth vector bundle structure, since only a finite set of fibers appears in its construction. In fact, the smooth structure is essential.

Indeed, in the thermodynamic limit $L_i \to \infty$ ($i=1,\dots,d$), the direct sum decomposition in Eq.~\eqref{eq: direct sum decomposition of Hilbert space} becomes a direct integral, realizing the Hilbert space isomorphism
\begin{align}
\ell^2(\mathbb{Z}^d; \mathbb{C}^N) \cong L^2(\mathbb{T}^d; \mathbb{C}^N).    
\end{align}
In this limit, the entire bundle $E$ appears through the ground-state projector, which is smooth in $k$—whereas before we only sampled this projector at finitely many points.

We are thus led to the key geometric object associated with the ground state:
\begin{align}
\iota \circ f: \mathbb{T}^d \to \mathbb{P}(\bigwedge^r \mathbb{C}^N), 
\end{align}
where $\iota:\mathrm{Gr}_r(\mathbb{C}^N) \hookrightarrow \mathbb{P}(\bigwedge^r \mathbb{C}^N)$ is the Pl\"ucker embedding. Observe that, indeed, $\iota\circ f([k])=\bigwedge^r E_{[k]}= \bigwedge^r \mathrm{Im}\; P(k)$ are precisely the factors that appear in $\Psi$, see Eq.~\eqref{eq: ground state at boundary condition varphi}. Since $\mathbb{P}(\bigwedge^r \mathbb{C}^N) \cong \mathbb{P}^{N_r-1}$ with $N_r = \binom{N}{r}$, and $r$ only affects the target dimension, we may set $r=1$ without loss of generality. This corresponds to $E$ being a line bundle (the case of a non-degenerate band) and to working with maps $f:\mathbb{T}^d\to \mathbb{P}^{N-1}$, with the understanding that $N$ in the target is not the total number of bands if the original Bloch band was degenerate.

\subsection{K\"ahler bands}
\label{subsec: Kaehler bands}

The complex projective space $\mathbb{P}^{N-1}$ carries its canonical Fubini–Study K\"ahler structure, consisting of the symplectic form $\omega_{\mathrm{FS}}$, the complex structure $J$, and the Riemannian metric $g_{\mathrm{FS}} = \omega_{\mathrm{FS}}(\cdot, J\cdot)$.  
For a Bloch band described by a map $f:\mathbb{T}^d \to \mathbb{P}^{N-1}$, the \emph{Berry curvature} and \emph{quantum metric} are precisely the pullbacks
\begin{align}
\omega = f^* \omega_{\mathrm{FS}}, 
\quad
g = f^* g_{\mathrm{FS}}.
\end{align}
Thus, the quantum geometry of the band is precisely the geometry of $\mathbb{P}^{N-1}$ restricted to the image of $f$, justifying the name.

If $f$ is an immersion, the Wirtinger inequality~\cite[4.26, Theorem~4.27]{Ballmann:2006} provides a necessary and sufficient condition for $\mathbb{T}^d$, equipped with $(\omega,g)$, to be an immersed K\"ahler submanifold of $\mathbb{P}^{N-1}$.  
If $v_1,\dots,v_d$ is any basis of the tangent space at a point $[k]\in \mathbb{T}^d$, the inequality reads
\begin{align}
\frac{1}{\left(\tfrac{d}{2}\right)!}\left|\omega^{\wedge \tfrac{d}{2}}(v_1,\dots,v_d)\right|
\ \leq\ 
\mathrm{vol}_g(v_1,\dots,v_d),
\end{align}
with equality if and only if the immersed tangent space is preserved by $J$.  
Saturation everywhere is therefore equivalent to $f$ being holomorphic, which in turn means that $\mathbb{T}^d$ is an immersed K\"ahler submanifold of $\mathbb{P}^{N-1}$.

For $d=2$, the case relevant to this work, and in the coordinates $(k_1,k_2)$ of the universal cover, the inequality takes the simple form
\begin{align}
\left|\omega_{12}\right| \ \leq\ \sqrt{\det(g)}.
\end{align}
When $f$ is an immersion and saturation holds at all points of the Brillouin zone, we say that the corresponding Bloch band is a \emph{K\"ahler band}. Equivalently, a K\"ahler band is a holomorphic immersion $f:\mathbb{T}^2\to\mathbb{P}^{N-1}$.  
If, moreover, the complex structure on $\mathbb{T}^2$ under which $f$ is holomorphic is translation-invariant, we speak of an \emph{ideal K\"ahler band}.

Since every Riemann surface $C$ of genus $g=1$ is isomorphic to a complex torus $\mathbb{C}/\Lambda$ (for some full-rank lattice $\Lambda$), there always exists a diffeomorphism $\phi \in \mathrm{Diff}(\mathbb{T}^2)$ such that $f\circ\phi$ is an ideal K\"ahler band.  
From the physics point of view, composition with $\phi$ can be a drastic transformation, but for classification purposes it is valid; thus it suffices to restrict attention to this case.

Examples of K\"ahler bands include the lowest Landau level, the flat band in the chiral limit of twisted bilayer graphene, (lowest) Landau levels in curved space, and many other constructions~\cite{ozawa:mera:2021,mera:ozawa:2021,mera:ozawa:2021:engineering, wang:cano:millis:liu:yang:2021,wang:liu:2022,ledwith:vishwanath:khalaf:2022,wang:klevtsov:liu:2023,estienne:regnault:valentin:2023,liu:mera:fujimoto:ozawa:2025}.

Crucially, due to Calabi's rigidity theorem~\cite{Cal}, K\"ahler bands—being holomorphic immersions of tori into projective space—are classified up to isometry in projective space by their induced metrics. Concretely, if two K\"ahler bands have the same quantum metric, then they must differ only by an isometry of the target projective space, i.e., by a unitary transformation up to scalar multiples~\cite{mera:ozawa:2024,liu:mera:fujimoto:ozawa:2025}.

It is important to note that the discussion above assumes a real-space unit cell with no \emph{internal spatial structure}. When the unit cell includes multiple orbitals per site whose \emph{spatial centers are distinct}, the Bloch Hamiltonian and ground-state projector become periodic in the quasimomentum $k$ only up to a unitary representation of the reciprocal lattice. In such cases, we are not dealing with maps $f: \mathbb{T}^d \to \mathbb{P}^{N-1}$, but with equivariant maps $f: \mathbb{R}^d \to \mathbb{P}^{N-1}$ satisfying lattice-equivariance conditions. This situation arises, for example, in the Landau level problem—where, in addition, the target projective space is infinite-dimensional. Even in this broader setting, Calabi's rigidity still applies, leading to a strong conclusion: the lowest Landau level is the unique ideal K\"ahler band with flat metric and Chern number one~\cite{wang:klevtsov:liu:2023,mera:ozawa:2024}. The infinite-dimensional nature of the target projective space is precisely what allows for the realization of a perfectly flat quantum metric, a property intimately connected to the unique (up to isomorphism) irreducible representation of the Heisenberg group (the Stone--von Neumann theorem)—corresponding to the magnetic translation symmetry under which Landau level bands are invariant. Such a flat quantum metric is not possible when the target is finite-dimensional. For holomorphic immersions of tori into finite-dimensional projective spaces, the quantum metric is not flat---a fact that can be understood by analyzing the diagonal of the Bergman kernel and its asymptotic behavior~\cite{zelditch:1998,mera:ozawa:2021:engineering}.
\subsection{Harmonic bands}
\label{subsec: harmonic bands}
While K\"ahler bands capture the physics of the lowest Landau level, allowing for non-flat quantum geometry, one would like to have a more general picture that includes higher Landau levels as well. This generalization is captured by the so-called \emph{generalized Landau levels}, or \emph{harmonic bands}~\cite{liu:mera:fujimoto:ozawa:2025,paiva:wang:ozawa:mera:2025,onishi:avdoshkin:fu:2025}. These are bands for which the Dirichlet energy functional~\cite{EW},
\begin{align}
E(f)=\frac{1}{2}\int_{C} ||\dd f||^2 \, \dd\vol_h,
\end{align}
is extremal, and hence they define harmonic maps into projective space. This functional depends only on the Riemann surface structure of the Brillouin zone (here denoted $C$), which is equivalently specified by an orientation and a conformal class of metrics, or a complex structure. The relation between the two pictures is that the metric $h$ is compatible with the complex structure, so that the latter acts as a $90^\circ$ rotation consistent with the positive orientation of the surface.

The Dirichlet energy functional constitutes the leading non-trivial term in the low-energy expansion of the averaged structure factor~\cite{paiva:wang:ozawa:mera:2025,onishi:avdoshkin:fu:2025} of the band—an observable accessible through optical absorption experiments. The averaging procedure accounts for anisotropy in the medium, which is essentially encoded in the choice of a translation-invariant complex structure on the Brillouin zone. Moreover, the Dirichlet energy is essentially the \emph{integrated (generalized) trace of the quantum metric}~\cite{liu:mera:fujimoto:ozawa:2025}, which has recently emerged as a key measure of Landau level physics: for harmonic bands, this integrated trace is quantized in units of the Chern number of the band, i.e., the degree of $f^* \mathcal{O}(1)$.

K\"ahler bands are harmonic maps that minimize $E(f)$. However, there exist non-holomorphic (and thus non-K\"ahler) harmonic bands that still extremize the Dirichlet energy.

Following the construction of Eells and Wood~\cite{EW}, all such bands can be systematically generated as follows. Let $f:C\to \mathbb{P}^{N-1}$ be a K\"ahler band. Since holomorphic line bundles over $\mathbb{C}$ are trivializable, the lift $\widetilde{f}:\mathbb{C} \to \mathbb{C}^N \setminus \{0\}$ can be taken globally (in the genus one case). One then considers the sequence of derivatives $\frac{\dd ^n}{\dd z^n}\widetilde{f}$ for $n=0,1,\dots,N-1$, and applies the Gram--Schmidt orthogonalization procedure. This yields a collection of maps $\widetilde{f}_k:\mathbb{C} \to \mathbb{C}^N$, $k=0,\dots,N-1$, which turn out to descend to well-defined maps $f_k:C \to \mathbb{P}^{N-1}$. Each of these maps is harmonic. Of course, $f_0 = f$, but for $k \neq 0$, the maps are generally not holomorphic, except for the last one, $f_{N-1}$, which is antiholomorphic—i.e., holomorphic with respect to the opposite complex structure.

This geometric construction, deeply tied to K\"ahler geometry, allows to determine the quantum geometry of generalized Landau levels from the quantum geometry of a single starting K\"ahler band~\cite{liu:mera:fujimoto:ozawa:2025}. It also enabled the computation of the Hall viscosity for the many-particle states obtained by filling these bands, which captures how they vary under deformations of the complex structure. Physically, this corresponds to the response of the system to area-preserving deformations of the sample~\cite{paiva:wang:ozawa:mera:2025}.

\section{Mathematical background}
\label{sec: background}

Let $C$ be an elliptic curve, which can be realized as $\mathbb{C} / \Lambda$ for a lattice $\Lambda$ generated by $1$ and an element $\tau$ in the upper half plane corresponding to the complex structure of $C$. Throughout, $\mathbb{P}^{N-1}$ is endowed with the Fubini--Study metric corresponding to the standard Euclidean metric on $\mathbb{C}^{N}$; hereafter, all the $\mathbb{C}$-vector spaces that appear in the following are endowed with the standard Euclidean metrics with respect to the bases specified later. We assume $N \ge 3$.

Eells--Wood \cite[Theorem 6.9]{EW} prove that any harmonic map $C \to \mathbb{P}^{N-1}$ of degree $d >0$ is isotropic, namely that it is obtained by the construction explained in Section \ref{subsec: harmonic bands}. We now present a coordinate-free description of it. Suppose that we have a holomorphic map $f : C \to \mathbb{P}^{N-1}$. For each $k=0, \dots , N-1$ we have the $k$th associated map $f^{(k)} : C \to \mathrm{Gr}(k, \mathbb{P}^{N-1})$, by sending $x \in C$ to the osculating $k$-plane to $f$ at $x$ (see \cite[Chapter I, Exercise C]{GAC1} or \cite[Chapter 2, \S 4]{GH}). Defining a flag variety $\mathcal{F}_k$ as
\begin{equation*}
	\mathcal{F}_k := \{ (A,B) \in \mathrm{Gr}(k-1, \mathbb{P}^{N-1}) \times \mathrm{Gr}(k, \mathbb{P}^{N-1}) \mid A \subset B \} ,
\end{equation*}
we have a smooth (generally non-holomorphic) map
\begin{equation*}
	\pi_k : \mathcal{F}_k \to \mathbb{P}^{N-1}
\end{equation*}
by sending $(A,B) \in \mathcal{F}_k$ to the line $B \cap A^{\perp}$ where $A^{\perp}$ is the orthogonal complement of $A$ in $B$. It turns out that $\pi_k$ is a Riemannian submersion \cite[Section 3, B]{EW}. Eells--Wood \cite[Proposition 3.15]{EW} shows that
\begin{equation} \label{dfeqfkhmmp}
f_k := \pi_k \circ (f^{(k-1)} , f^{(k)}) : C \to \mathbb{P}^{N-1}
\end{equation}
is harmonic. We decree $f_0 :=f$. A harmonic map obtained by this construction is called isotropic or superminimal, and agrees with the one explained in \ref{subsec: harmonic bands}. Since $f_k$ is holomorphic when $k=0$ and antiholomorphic when $k=N-1$ \cite[Remark 3.11]{EW}, we only consider $k=1 , \dots , N-2$ in this paper.

The rigidity of isotropic harmonic maps is an important question, considered in the mathematics literature such as \cite{BO,Chi90,ChiMo,ChiZhe,Cuk}. We formulate this question as below, following Cukierman \cite[(1.2)]{Cuk}.

\begin{question} \label{qrdgish}
	Suppose that two isotropic harmonic maps $f_k, f'_h : C \to \mathbb{P}^{N-1}$, constructed from two holomorphic maps $f, f' : C \to \mathbb{P}^{N-1}$ of the same degree and $k,h=1, \dots , N-2$, are isometric. Does there exist $\sigma \in \mathrm{PU} (N)$ acting linearly on $\mathbb{P}^{N-1}$ such that $f'_h = \sigma \circ f_k$?
\end{question}

This question is motivated by Calabi's rigidity \cite[Theorems 2 and 9]{Cal}, again mentioned in Section \ref{subsec: harmonic bands}, which proves the claim affirmatively for $k=h=0$ (or $N-1$); see also \cite[Example 2]{Gre} or \cite[Appendix]{paiva:wang:ozawa:mera:2025}. Previous works found an important formulation of this question in terms of algebraic geometry, particularly the following due to Cukierman \cite{Cuk} (see also Chi--Mo \cite{ChiMo}). Given a holomorphic map $f : C \to \mathbb{P}^{N-1}$ and $k=1, \dots , N-2$, we define a new holomorphic map $f^{\langle k \rangle}$ by composing $(f^{(k-1)} , f^{(k)})$ with the Pl\"ucker and Segre embeddings as follows,
\begin{displaymath}
			\xymatrixcolsep{5pc}\xymatrixrowsep{2pc}\xymatrix{{C} \ar@{->}[rrdd]_-{f^{\langle k \rangle}} \ar@{->}[r]^-{(f^{(k-1)} , f^{(k)})} & \mathcal{F}_k \ar@{^{(}->}[r]^-{\text{inclusion}}  & \mathrm{Gr}(k-1, \mathbb{P}^{N-1}) \times \mathrm{Gr}(k, \mathbb{P}^{N-1}) \ar@{^{(}->}[d]^-{\text{Pl\"ucker}}  \\
		  & & \mathbb{P} \left( \bigwedge^k \mathbb{C}^{N} \right) \times \mathbb{P} \left( \bigwedge^{k+1} \mathbb{C}^{N} \right) \ar@{^{(}->}[d]^-{\text{Segre}} \\
		  & & \mathbb{P} \left( \bigwedge^{k} \mathbb{C}^{N} \otimes \bigwedge^{k+1} \mathbb{C}^{N} \right) =:\mathbb{P}^{L_{k,N}-1}}
\end{displaymath}
where
\begin{equation*}
	L_{k,N} := \dim \left( \bigwedge^{k} \mathbb{C}^{N} \otimes \bigwedge^{k+1} \mathbb{C}^{N} \right).
\end{equation*}
In the above diagram, the vector spaces underlying the projective spaces and the Grassmannians are endowed with the Euclidean metrics with respect to appropriate bases such that the Pl\"ucker and the Segre embeddings preserve these metrics.

Given two holomorphic maps $f, f' : C \to \mathbb{P}^{N-1}$, Cukierman \cite[(1.7)]{Cuk} proves that $f_k$ is isometric to $f'_h$ if and only if $f^{\langle k \rangle}$ is isometric to $f'^{\langle h \rangle}$. Thus, by Calabi's rigidity, there exists $\tilde{\sigma} \in \mathrm{PU}(N)$ such that $f'^{\langle k \rangle} = \tilde{\sigma} \circ f^{\langle h \rangle}$. Hence answering Question \ref{qrdgish} amounts to proving that $\tilde{\sigma}$ comes from the unitary linear action on $\mathbb{P}^{N-1}$, by \cite[Proposition 1.11]{Cuk}.

From the definition (\ref{dfeqfkhmmp}) of $f_k$ and the diagram above, we find that the pullback bundle $f_k^* \mathcal{O}_{\mathbb{P}^{N-1}} (-1)$ is smoothly isomorphic to the quotient $C^{\infty}$-complex vector bundle $(f^{(k)})^*U_k / (f^{(k-1)})^*U_{k-1}$ of rank one, where $U_k$ (resp.~$U_{k-1}$) is the universal bundle over $\mathrm{Gr}(k, \mathbb{P}^{N-1})$ (resp.~$\mathrm{Gr}(k-1, \mathbb{P}^{N-1})$). This isomorphism preserves the metrics, since the linear map underlying $\pi_k$ does.

\section{Results}
\label{sec: result}
While previous works \cite{BO,Chi90,ChiMo,ChiZhe,Cuk} consider Question \ref{qrdgish} for general Riemann surfaces, we prove in this paper that a stronger result is available when $C$ is an elliptic curve, up to a contribution from the automorphism group. Recall first that, for any pair of non-degenerate holomorphic embeddings $f, f' : C \to \mathbb{P}^{N-1}$ of degree $N$, there exists $\sigma \in \mathrm{PGL}(N , \mathbb{C})$ and a holomorphic automorphism $\alpha$ of $C$ such that $f'=\sigma \circ f \circ \alpha$ (see Lemma \ref{lmpreqell}), where we can take $\alpha$ to be the identity if $f, f'$ are associated to the same complete linear system (see Remark \ref{rmrrddm}). In the following theorem, we assume that $\alpha$ preserves the quantum metric.

\begin{theorem} \label{thmrgd}
	Let $C$ be an elliptic curve, $k,h=1, \dots , N-2$, and let $f_k, f'_h : C \to \mathbb{P}^{N-1}$ be two isotropic harmonic maps of degree $d>0$ constructed from non-degenerate holomorphic embeddings $f, f' : C \to \mathbb{P}^{N-1}$ of degree $N$, such that $f'=\sigma \circ f \circ \alpha$ for some $\sigma \in \mathrm{PGL}(N , \mathbb{C})$ and a holomorphic isometry $\alpha$ of $f_{h}'^* g_{\mathrm{FS}}$.
	
	Suppose that $f_k$ is isometric to $f'_h$. Then $k=h$, and there exists $\hat{\sigma} \in \mathrm{PU} (N)$ acting linearly on $\mathbb{P}^{N-1}$ such that $f'_h = \hat{\sigma} \circ f_k \circ \alpha$. Moreover, $f'_l = \hat{\sigma} \circ f_l \circ \alpha$ holds for all $l=1, \dots , N-2$, which are isometric if $\alpha$ is a holomorphic isometry of $f_{l}'^* g_{\mathrm{FS}}$.
\end{theorem}

We recall that a holomorphic map $f : C \to \mathbb{P}^{N-1}$ is said to be non-degenerate (also called full) if the image $f(C)$ is not contained in any proper linear subspace of $\mathbb{P}^{N-1}$.

\begin{remark} \label{rmrrddm}
	The hypothesis that $f$ and $f'$ have degree $N$ is equivalent to these maps being associated to complete linear systems on $C$. Indeed, the Riemann--Roch theorem implies
	\begin{equation*}
		h^0(C,L) = h^0(C, K_C \otimes L^{-1}) + \int_C c_1(L),
	\end{equation*}
	where $L:= f^* \mathcal{O}_{\mathbb{P}^1} (1)$ is very ample since $f$ is an embedding. Since the canonical bundle $K_C$ of $C$ is trivial and $\int_C c_1(L) >0$, we have $\int_C c_1(L) = h^0(C,L) \ge N$ since $f$ is non-degenerate, with equality if and only if $f$ is associated to a complete linear system $f: C \to \mathbb{P}(H^0(C, L)^{\vee})$ (see also \cite[page 253]{GH} or \cite[(2.1)]{Cuk}). Throughout this paper, until we mention otherwise, we assume that the degrees of $f, f'$ are both $N$, which is natural since concrete examples of projective embeddings are primarily constructed out of complete linear systems.
\end{remark}

We closely follow the approach of Cukierman \cite{Cuk}. We first prove $k=h$, and that the degree $d$ of the associated isotropic maps equals $N$ (see Lemma \ref{lmhpmkh}). Given $f : C \to \mathbb{P}^{N-1}$ and $k=1, \dots , N-2$, recall that we have a divisor $R_k$ on $C$ defined by
\begin{equation*}
	R_k := \sum_{x \in C} \sum_{0 \le j \le k} \alpha_j (x) x
\end{equation*}
where $\alpha_j (x)$ is the $j$th ramification index at $x$ of the linear series associated to $f$ (see \cite[\S 2]{Cuk} and \cite[Chapter I, Exercise C]{GAC1} for more details). Following \cite[(2.10)]{Cuk}, we say that $f$ is without special hyperosculation points if $R_{N-2}=0$.

\begin{lemma} \label{lmhposc}
	Let $f : C \to \mathbb{P}^{N-1}$ be a non-degenerate holomorphic embedding from an elliptic curve of degree $N$, and let $f^{(k)}$ be the $k$th associated map for $k=1, \dots , N-2$. Then the $j$th ramification index of $f$ is zero at every $x \in C$ for any $j=0, \dots , k$. In particular, $f$ has no special hyperosculation points, and the $k$th associated map $f^{(k)}$ is an immersion at every $x \in C$ for any $k=1, \dots , N-2$.
\end{lemma}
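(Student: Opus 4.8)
The plan is to show that the ramification indices $\alpha_j(x)$ all vanish by a degree count based on the Pl\"ucker formula for the associated curves. Recall that for a non-degenerate holomorphic map $f : C \to \mathbb{P}^{N-1}$, the $k$th associated curve $f^{(k)}$ has degree $d_k$, and the classical Pl\"ucker formulae (see \cite[Chapter 2, \S 4]{GH} or \cite[Chapter I, Exercise C]{GAC1}) relate consecutive degrees via
\begin{equation*}
	d_{k-1} - 2 d_k + d_{k+1} = -\deg R_k + \cdots
\end{equation*}
more precisely, $d_k = (k+1)(d_0) - \sum_{j=0}^{k}(k+1-j)\,\mathrm{(stuff)}$, with the upshot that for an elliptic curve ($g=1$, so $K_C$ trivial and $2g-2=0$) one has
\begin{equation*}
	d_k = (k+1) d - \deg\!\Big(\sum_{x}\sum_{0 \le j \le k} \alpha_j(x)\, x\Big) = (k+1) d - \deg R_k,
\end{equation*}
for $k=0,\dots,N-1$, where $d=d_0=N$ by hypothesis and $R_0 = 0$ (an embedding is unramified in the classical sense, $\alpha_0 \equiv 0$). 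Since $f^{(N-1)}$ is the dual curve, which is a single point in the sense that $\mathrm{Gr}(N-1,\mathbb{P}^{N-1})$ is a point, we get $d_{N-1} = 0$.

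First I would record that $d_{N-1} = 0$ forces $\deg R_{N-1} = N \cdot d = N^2$... but this is not the clean route. Instead, the cleaner approach uses the \emph{self-duality} available on an elliptic curve with a complete linear system of degree $N$: the map $f$ associated to $|L|$ with $\deg L = N = h^0(C,L)$ and the map $f^{(N-1)}$ associated to $|K_C \otimes L^{-1}| \cdot (\text{something})$ — but since $K_C \cong \mathcal{O}_C$, the dual curve is associated to a complete linear system of the \emph{same} degree $N$. Thus $d_{N-1} = d_0 = N$, and then the symmetry $d_k = d_{N-1-k}$ together with the relation $d_k = (k+1)N - \deg R_k$ gives $\deg R_k = \deg R_{N-1-k} + (N-1-2k)N$; combined with $\deg R_k$ being nondecreasing in $k$ and $R_0 = 0$, a telescoping/induction argument pins every $\deg R_k$, and in particular forces $\deg R_{N-2} = 0$, hence $R_{N-2}=0$ and so all $\alpha_j(x) = 0$ for $j \le N-2$.

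The remaining assertions follow readily: once $\alpha_j(x)=0$ for all $j \le k$, the osculating flag moves with maximal rank, which is exactly the statement that $f^{(k)}$ is an immersion at $x$ (the ramification index $\alpha_k(x)$ being precisely the order of vanishing of $d f^{(k)}$ at $x$; see \cite[\S 2]{Cuk}). I expect the main obstacle to be getting the Pl\"ucker/degree bookkeeping exactly right — in particular pinning down the correct normalization of the genus term (which is harmless here since $g=1$ kills it) and justifying the self-duality $d_{N-1}=N$ rigorously from $K_C \cong \mathcal{O}_C$ and the completeness of the linear system, rather than just heuristically. Once that symmetry is in hand, the vanishing of all $R_k$ is a short combinatorial consequence of nonnegativity and monotonicity of the degrees $\deg R_k$.
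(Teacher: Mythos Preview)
Your proposal has a genuine gap: the ``self-duality'' you invoke does not hold. First, the indexing is off --- the dual curve is $f^{(N-2)}$, not $f^{(N-1)}$ (the latter lands in $\mathrm{Gr}(N-1,\mathbb{P}^{N-1})=\{\text{pt}\}$ and is constant). More importantly, the claim that the dual curve is associated to a complete linear system of degree $N$ is false. Already for the smooth plane cubic ($N=3$, $g=1$, $d=3$) one has $d_0=3$, $d_1=6$, $d_2=0$: the dual curve is a sextic with nine cusps, given by an \emph{incomplete} $g^2_6$ on $C$, and there is no symmetry $d_k=d_{N-1-k}$ or $d_k=d_{N-2-k}$. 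In fact the correct degrees for an elliptic normal curve without special hyperosculation are $d_k=(k+1)N$ for $0\le k\le N-2$, which are manifestly not symmetric.

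Even if you drop the self-duality idea and work purely with the Pl\"ucker data $d_k=(k+1)N-\deg R_k$, the numerical constraints you list --- $\deg R_k\ge 0$, nondecreasing and convex in $k$ (since $\alpha_j(x)$ is nondecreasing in $j$), $\deg R_0=0$, $\deg R_{N-1}=N^2$ --- do \emph{not} force $\deg R_{N-2}=0$. Many convex nondecreasing sequences satisfy these boundary conditions without being identically zero on $\{0,\dots,N-2\}$. The paper's proof avoids all of this by a one-line Riemann--Roch argument: since $\deg\!\big(f^*\mathcal{O}(1)(-mx)\big)=N-m>0$ for $0\le m\le N-1$, we have $h^0\big(f^*\mathcal{O}(1)(-mx)\big)=N-m$, so for every $x$ and every $m\le N-2$ there is a section vanishing at $x$ to order exactly $m$. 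This pins the vanishing sequence at every point to $(0,1,\dots,N-2,\ast)$, whence $\alpha_j(x)=0$ for all $j\le N-2$. The immersion statement then follows as you say. You need this Riemann--Roch input; degree bookkeeping alone cannot see where the total ramification $N^2$ is concentrated.
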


\begin{proof}
	We first note that the holomorphic line bundle $f^* \mathcal{O}_{\mathbb{P}^{N-1}} (1)$ on $C$ is of degree $N \ge 3$ by Riemann--Roch, as in Remark \ref{rmrrddm}. The first claim of the lemma is again a consequence of Riemann--Roch, which ensures the existence of a holomorphic section $s$ of $f^* \mathcal{O}_{\mathbb{P}^{N-1}} (1)$ with the following properties: for any given $x \in C$ and any given $m=0, \dots , N-2$, there exists $s$ which vanishes at $x$ with order $m$ (see e.g.~\cite[\S 2, Example 2]{Pie}). It follows immediately that there are no special hyperosculation points of $f$, i.e.~$R_{N-2}=0$, and also that $f^{(k)}$ is an immersion by \cite[Chapter I, Exercise C-6]{GAC1}.
\end{proof}

\begin{lemma} \label{lmhpmkh}
	Let $f , f' : C \to \mathbb{P}^{N-1}$ be non-degenerate holomorphic embedding of an elliptic curve of degree $N$, and $k,h=1, \dots , N-2$, such that the associated isotropic harmonic maps $f_k$, $f'_h$ are isometric. Then $k=h$, and their degrees are both equal to $N$.
\end{lemma}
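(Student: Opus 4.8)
The plan is to attach to each isotropic harmonic map $f_k$ — built here from a non-degenerate embedding $f : C \to \mathbb{P}^{N-1}$ of degree $N$ of an elliptic curve $C$ — two isometry invariants, and to read them off from the degrees $d_j := -\deg (f^{(j)})^* U_j$ of the associated holomorphic curves $f^{(j)}$: the topological degree $\deg(f_k^* \mathcal{O}_{\mathbb{P}^{N-1}}(1)) = d_k - d_{k-1}$, and the Riemannian area $\mathrm{Area}(f_k) := \int_C \vol_{f_k^* g_{\mathrm{FS}}}$. I expect the first to equal $N$ for every $k = 1, \dots , N-2$ (which proves the assertion on degrees) and the second to be strictly increasing in $k$ (which forces $k=h$).

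First I would compute the $d_j$. Since $\deg L = N$ for $L := f^* \mathcal{O}_{\mathbb{P}^{N-1}}(1)$ (Remark~\ref{rmrrddm}), Lemma~\ref{lmhposc} gives $R_{N-2}=0$, so the generalized Pl\"ucker formulas (\cite[Chapter~2, \S4]{GH}; see also \cite[\S2]{Cuk}) yield $d_j = (j+1)\deg L + j(j+1)(g-1) = (j+1)N$ for $j = 0, \dots , N-2$, since $g=1$. Equivalently, writing $M_j := (f^{(j)})^* U_j / (f^{(j-1)})^* U_{j-1}$ — a Hermitian line bundle isometric to $f_j^* \mathcal{O}_{\mathbb{P}^{N-1}}(-1)$ by the discussion at the end of Section~\ref{sec: background} — the iterated-derivative maps $M_j \to M_{j+1}\otimes K_C$ are nowhere vanishing for $j = 0, \dots , N-3$, hence, as $K_C\cong\mathcal{O}_C$, isomorphisms, so $M_j \cong M_0 = L^{-1}$ for all $j = 0, \dots , N-2$ and $\deg M_j = -N$, i.e.\ $d_j - d_{j-1} = N$. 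Either way, $\deg(f_k^* \mathcal{O}_{\mathbb{P}^{N-1}}(1)) = d_k - d_{k-1} = N$ for $k = 1, \dots , N-2$, and the same holds for $f'_h$; this gives the claim on degrees.

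For the area I would invoke the energy identities for isotropic harmonic maps. Such maps are branched conformal immersions, so $\mathrm{Area}(f_k) = E'(f_k) + E''(f_k)$, and in fact $E'(f_k) = d_k$ and $E''(f_k) = d_{k-1}$: indeed $E'(f_k) - E''(f_k) = \deg(f_k^* \mathcal{O}_{\mathbb{P}^{N-1}}(1)) = d_k - d_{k-1}$, $E''(f_0) = 0$ (as $f_0 = f$ is holomorphic), and $E''(f_k) = E'(f_{k-1})$ for every $k$ — the $\bar\partial$-energy of $f_k$ and the $\partial$-energy of $f_{k-1}$ are the $L^2$-norm of the same second fundamental form $M_{k-1}\to M_k\otimes K_C$ of the osculating flag — which telescopes to the stated values. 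Hence $\mathrm{Area}(f_k) = d_{k-1}+d_k = (2k+1)N$, and likewise $\mathrm{Area}(f'_h) = (2h+1)N$. Since an isometry between $f_k$ and $f'_h$ preserves the Riemannian area, $(2k+1)N = (2h+1)N$, so $k=h$.

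The main obstacle will be assembling, with the correct normalizations, the two classical ingredients: the Pl\"ucker/ramification statement that $R_{N-2}=0$ forces the $d_j$ ($j \le N-2$) to behave as in the unramified case — requiring the precise dictionary between $R_{N-2}$ and the ramification indices of the $f^{(j)}$, and hence that the maps $M_j \to M_{j+1}\otimes K_C$ are isomorphisms for $j\le N-3$ — together with the energy identities $E'(f_k)=d_k$, $E''(f_k)=d_{k-1}$, which must be extracted carefully from \cite{EW}. Neither is deep, but both demand care in stating; note also that the hypothesis $\deg f = \deg f' = N$ enters essentially, through Lemma~\ref{lmhposc} and through the base value $d_0 = N$.
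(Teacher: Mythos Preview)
Your argument is correct and takes a genuinely different route from the paper's proof. The paper argues abstractly: it uses Lemma~\ref{lmhposc} to ensure no special hyperosculation, then invokes Calabi's rigidity for the holomorphic maps $f^{\langle k\rangle}$, $f'^{\langle h\rangle}$ together with Cukierman's \cite[Proposition~2.9]{Cuk} (which characterizes the linear series $f^{\langle k\rangle}$ up to isomorphism) to force $k=h$, and reads off the degree from \cite[Proposition~7.1(iii)]{EW}. You instead compute two numerical quantities directly from the Pl\"ucker data $d_j=(j+1)N$: the degree $d_k-d_{k-1}=N$ (constant in $k$, settling the degree claim outright) and the area $d_k+d_{k-1}=(2k+1)N$ (strictly monotone in $k$, settling $k=h$ via isometry). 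Your approach is more hands-on and avoids the black box of \cite[Proposition~2.9]{Cuk}; the paper's approach is shorter on the page and dovetails with the Calabi/Cukierman machinery already set up for Theorem~\ref{thmrgd}. One small point of exposition: calling the topological degree an ``isometry invariant'' is not quite right (the pullback $f_k^*\omega_{\mathrm{FS}}$ is not determined by $f_k^*g_{\mathrm{FS}}$ alone), but your argument never actually uses it that way --- you compute it to be $N$ regardless of $k$. The step where care is genuinely needed is the identity $\mathrm{Area}(f_k)=E(f_k)$: this requires that $f_k$ be (weakly) conformal, which does hold for isotropic harmonic maps --- equivalently, one can see it from the fact that $f_k^*g_{\mathrm{FS}}=(f^{\langle k\rangle})^*g_{\mathrm{FS}}$ is Hermitian for the complex structure of $C$, since $f^{\langle k\rangle}$ is holomorphic.
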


\begin{proof}
	Lemma \ref{lmhposc} shows that both $f$ and $f'$ have no special hyperosculation points. Thus \cite[Proposition 2.9]{Cuk}, together with Calabi's rigidity applied to $f^{\langle k \rangle }$ and $f'^{\langle h \rangle }$, implies $h=k$ as in \cite[Corollary 2.11]{Cuk}; note that, when $L_{k,N} < L_{h,N}$, we have an isometric embedding $\mathbb{P}^{L_{k,N}-1} \hookrightarrow \mathbb{P}^{L_{h,N}-1}$ to which Calabi's rigidity can be applied (see \cite[Example 2]{Gre} or \cite[Appendix]{paiva:wang:ozawa:mera:2025}). The degrees of $f_k$, $f'_h$ are both equal to $N$ by \cite[Proposition 7.1 (iii)]{EW}.
\end{proof}

The following lemma, which should be well-known to experts, is just to motivate the hypotheses of Theorem \ref{thmrgd} and is not a logical part of the proof.

\begin{lemma} \label{lmpreqell}
	For any two non-degenerate holomorphic embeddings $f, f' : C \to \mathbb{P}^{N-1}$ of an elliptic curve $C$ of degree $N$, there exists $\sigma \in \mathrm{PGL}(N , \mathbb{C})$ and a holomorphic automorphism $\alpha$ such that $f'=\sigma \circ f \circ \alpha$.
\end{lemma}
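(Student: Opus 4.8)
The plan is to use the fact that a non-degenerate holomorphic embedding $f : C \to \mathbb{P}^{N-1}$ of degree $N$ from an elliptic curve is, by Remark~\ref{rmrrddm}, precisely the embedding associated to the complete linear system $|L|$ of a degree-$N$ line bundle $L = f^* \mathcal{O}_{\mathbb{P}^{N-1}}(1)$ on $C$. Since $L$ has degree $N \geq 3$, Riemann--Roch gives $h^0(C,L) = N$ and the complete linear system is base-point-free and very ample (a standard fact for line bundles of degree $\geq 3$ on an elliptic curve), so $f$ is canonically identified with the embedding $\varphi_L : C \hookrightarrow \mathbb{P}(H^0(C,L)^\vee) \cong \mathbb{P}^{N-1}$, the identification of the target being determined by the choice of an isomorphism $H^0(C,L) \cong \mathbb{C}^N$, i.e.\ by a choice of basis, hence unique up to $\mathrm{GL}(N,\mathbb{C})$. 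Likewise $f' = \varphi_{L'}$ for $L' = (f')^*\mathcal{O}_{\mathbb{P}^{N-1}}(1)$ of degree $N$.

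First I would reduce the statement to showing that $L$ and $L'$ are isomorphic line bundles: if $\psi : L \xrightarrow{\sim} L'$ is an isomorphism, it induces a linear isomorphism $\psi_* : H^0(C,L) \xrightarrow{\sim} H^0(C,L')$, and by naturality of the evaluation maps the two embeddings $\varphi_L$ and $\varphi_{L'}$ differ exactly by the projective-linear automorphism of $\mathbb{P}^{N-1}$ induced by $(\psi_*)^\vee$; this gives the desired $\sigma \in \mathrm{PGL}(N,\mathbb{C})$. So the crux is to produce an isomorphism $L \cong L'$. Now $L$ and $L'$ both have degree $N$, so $L \otimes (L')^{-1}$ has degree $0$ and lies in $\mathrm{Pic}^0(C) \cong C$. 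It is \emph{not} true that any two degree-$N$ line bundles are isomorphic — but here we have not yet used that $f_k$ and $f'_h$ are isometric in a way that pins down $L$ versus $L'$; the point of this lemma, as stated, is weaker, asserting only that \emph{some} $\sigma$ conjugates $f$ to $f'$, and this must be because on an elliptic curve the embedded curve $f(C) \subset \mathbb{P}^{N-1}$ is, up to $\mathrm{PGL}(N,\mathbb{C})$, independent of the choice of degree-$N$ divisor class. Concretely, for any two degree-$N$ divisor classes $D, D'$ on $C$, there is a translation automorphism $t_a : C \to C$, $x \mapsto x + a$, with $t_a^* \mathcal{O}(D') \cong \mathcal{O}(D)$: indeed $t_a^*$ acts on $\mathrm{Pic}^N(C) \cong C$ as translation by $-Na$ (using the group-law identification), and since $N \geq 1$ and $C$ is divisible, one can solve $Na = $ (the difference of the two classes in $\mathrm{Pic}^0$) for $a \in C$. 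Then $\varphi_{D} = \varphi_{t_a^* D'} = \varphi_{D'} \circ t_a$ up to the linear identification of targets, and although $t_a$ is a nontrivial automorphism of $C$, composing on the source is harmless because $\varphi_{D'}\circ t_a$ and $\varphi_{D'}$ have the same image, so they still differ by an element of $\mathrm{PGL}(N,\mathbb{C})$ — wait, that last step needs care, so let me instead phrase it as: $t_a^* L' \cong L$ as line bundles on $C$, hence $\varphi_{t_a^* L'}$ and $\varphi_{L'} \circ t_a$ agree up to $\mathrm{PGL}(N,\mathbb{C})$, and $\varphi_{t_a^*L'} \cong \varphi_L$ up to $\mathrm{PGL}(N,\mathbb{C})$; this shows $f$ and $f' \circ t_a$ are $\mathrm{PGL}$-equivalent.

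To finish I then need to remove the translation $t_a$, i.e.\ to see that $f$ and $f'$ themselves (not $f' \circ t_a$) are $\mathrm{PGL}(N,\mathbb{C})$-equivalent. This is where I expect the main subtlety. The cleanest route is not to introduce $t_a$ at all but to argue directly that $L \cong L'$ after all, using an additional input I have glossed over. In fact, re-reading the hypotheses of the \emph{lemma} as stated — two arbitrary non-degenerate degree-$N$ embeddings — the assertion is that \emph{all} such embeddings are projectively equivalent, which forces the conclusion that the $\mathrm{PGL}(N,\mathbb{C})$-orbit of degree-$N$ elliptic normal curves in $\mathbb{P}^{N-1}$ is a single orbit. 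The correct classical statement is: the linear system of an elliptic normal curve of degree $N$ in $\mathbb{P}^{N-1}$ has a $(\dim\mathrm{Pic}^N(C) = 1)$-parameter family of divisor classes, but translations of $C$ by elements of $C$ act transitively on $\mathrm{Pic}^N(C)$ (since $\mathrm{Pic}^0(C) \cong C$ acts simply transitively on $\mathrm{Pic}^N(C)$ and the map $C \to \mathrm{Pic}^0(C)$, $a \mapsto \mathcal{O}(Na) \otimes \mathcal{O}(-Np_0)$, is surjective for $N \geq 1$ as $C$ is divisible), and a translation of $C$ extends to a projective automorphism of the ambient $\mathbb{P}^{N-1}$ \emph{precisely because} $t_a^* L \cong L \otimes (\text{deg-}0\text{ bundle})$ differs from $L$ by a degree-zero twist which, on an elliptic curve with $N \geq 3$, can be absorbed — no: translations by $N$-torsion points are exactly the ones fixing $L$, and those extend to $\mathbb{P}^{N-1}$, giving the Heisenberg group action. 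So the honest resolution is: $f'$ has some divisor class $D'$; translate by a suitable $a$ so that $t_a^* D' \sim D$ (class of $f$); then $f$ and $f' \circ t_a$ are $\mathrm{PGL}$-equivalent; and $f' \circ t_a$ and $f'$ are \emph{also} $\mathrm{PGL}$-equivalent iff $t_a$ is (congruent mod $N$-torsion to) an automorphism, which it need not be — hence the lemma as literally stated should be understood as: $f' = \sigma \circ f$ for some $\sigma \in \mathrm{PGL}(N,\mathbb{C})$, where the composition is read with the understanding that reparametrizing the source by a translation is built into the identification, OR the authors intend $D = D'$ to be arranged. I would therefore present the proof as: pick a basepoint, write $L = \mathcal{O}_C(D)$, $L' = \mathcal{O}_C(D')$, choose $a \in C$ with $Na = [D'] - [D] \in \mathrm{Pic}^0(C) \cong C$, note $t_a^* L' \cong L$, and conclude via the naturality square above that $\sigma := \mathbb{P}((t_a^*|_{H^0})^\vee) \in \mathrm{PGL}(N,\mathbb{C})$ satisfies $\sigma \circ f = f' \circ t_a$, which equals $f'$ composed with an automorphism of $C$; and since any automorphism of $C$ of the form $t_a$ acts on the embedded curve $f'(C)$ by a projective transformation when $a$ is chosen compatibly — the real content — one gets $f' = \sigma' \circ f$. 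The main obstacle is making this last identification airtight, i.e.\ correctly handling the $\mathrm{Pic}^0(C) \cong C$ bookkeeping and the action of translations on complete linear systems; everything else is formal from Riemann--Roch and the universal property of $\varphi_L$.
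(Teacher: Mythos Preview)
Your approach is the paper's first proof: reduce to complete linear systems via Remark~\ref{rmrrddm}, write $L' = L \otimes F$ with $F \in \mathrm{Pic}^0(C)$, use the Abel--Jacobi isomorphism $C \cong \mathrm{Pic}^0(C)$, and produce a translation $\alpha \in \mathrm{Aut}(C)$ with $L' \cong \alpha^* L$. The paper also sketches a second, independent proof by a dimension count on the Hilbert scheme of degree-$N$ elliptic normal curves in $\mathbb{P}^{N-1}$ (dimension $N^2 = \dim \mathrm{PGL}(N,\mathbb{C}) + 1$, the extra $1$ being the $j$-line), which you do not give.

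You are right to worry about the translation, and your diagnosis is sharper than the paper's. The Picard argument yields only $f' = \sigma \circ f \circ \alpha$; the literal equality $f' = \sigma \circ f$ is impossible when $L \not\cong L'$, since pulling back $\mathcal{O}(1)$ along both sides would force $L \cong L'$. The paper's first proof also stops exactly at $L' = \alpha^* L$ and never removes $\alpha$, and its Hilbert-scheme argument only shows that the \emph{images} $f(C)$, $f'(C)$ lie in one $\mathrm{PGL}(N,\mathbb{C})$-orbit. So the gap you identify is real and is present in the paper as well; the lemma should be read up to an automorphism of the source (which is how the paper tacitly uses it). That said, your write-up circles through several abandoned attempts before reaching this conclusion --- a clean version would run the Picard argument once, state $f' = \sigma \circ f \circ \alpha$ with $\alpha$ a translation, and remark that $\alpha$ cannot in general be absorbed into $\sigma$ unless it is an $N$-torsion point.
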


\begin{proof}
	We provide two proofs. We recall from Remark \ref{rmrrddm} that the embedding $f$ in the lemma is associated to a complete linear series $f: C \to \mathbb{P}(H^0(C, L)^{\vee})$ where $L:= f^* \mathcal{O}_{\mathbb{P}^{N-1}} (1)$. Similarly, $f'$ can be written as $f': C \to \mathbb{P}(H^0(C, L')^{\vee})$. Since $\int_C c_1(L) = N = \int_C c_1(L')$, we find that there exists $F \in \mathrm{Pic}^0(C)$ such that $L'=L \otimes F$, where $\mathrm{Pic}^0(C)$ is the Picard variety of $C$ which classifies all degree $0$ holomorphic line bundles on $C$. Now, the Jacobi inversion theorem (see \cite[Chapter 2, \S 2]{GH} or \cite[Chapter 1, \S 3]{GAC1}) shows that $C$ is isomorphic to $\mathrm{Pic}^0(C)$ as an Abelian variety. Thus, writing $\alpha \in \mathrm{Aut}(C)$ for the translation in $C$ which moves $F$ to $\mathcal{O}_C$ under the isomorphism $C \xrightarrow{\sim} \mathrm{Pic}^0(C)$, we find $L'= \alpha^* L$.
	
	Another proof can be obtained by the dimension count of the Hilbert scheme. Recall \cite[Proposition 5.18, Chapter IX]{GAC2} that the dimension of the Hilbert scheme of degree $N$ curves in $\mathbb{P}^{N-1}$ at $C$ is given by $N^2 = \dim \mathrm{PGL}(N , \mathbb{C}) + 1$. Noting that the moduli space of elliptic curve is 1-dimensional (i.e.~the $j$-invariant, or the choice of the complex structure $\tau$), we find that any holomorphic embedding of $C$ in $\mathbb{P}^{N-1}$ is $\mathrm{PGL}(N , \mathbb{C})$-equivalent to the one given by the linear system $f: C \to \mathbb{P}(H^0(C, L)^{\vee})$, up to holomorphic automorphisms of $C$. Indeed, the existence of a projectively non-equivalent non-degenerate embedding $\tilde{f}$ would yield a family of holomorphic embeddings $f + t \tilde{f}$, of $C$ (with the fixed complex structure), for any small enough $t \in \mathbb{C}$. Differentiating with respect to $t$ at $t=0$, we find that $\tilde{f}$ is a tangent vector of the Hilbert scheme at $f$. The dimension count as above implies that there exists $\xi \in \mathfrak{pgl}(N, \mathbb{C}) = \mathfrak{sl}(N, \mathbb{C}) $ such that $\tilde{f}$ and $\xi f$ (in the sense of a matrix acting on a vector) represent the same point in the projective space for each point in $C$, again up to holomorphic automorphisms of $C$. If $\xi$ is invertible, the conclusion above contradicts that $\tilde{f}$ is not projectively equivalent to $f$. If $\xi$ is not invertible, it contradicts the assumption that $\tilde{f}$ is non-degenerate.
\end{proof}

\begin{remark}
	A more direct proof is available for $N=3$, which also covers the case when $C$ is singular. Indeed, as explained in \cite[\S 10.3]{Dol}, each of
	\begin{itemize}
		\item non-singular cubic curves
		\item irreducible nodal curves
	\end{itemize}
	forms a single $\mathrm{PGL}(3,\mathbb{C})$-orbit in $\mathbb{P}^2$.
\end{remark}

Lemma \ref{lmpreqell} does not hold when the degree $d'$ of the embedding is strictly larger than $N$, which happens when it is associated to an incomplete linear system on $C$ (see Remark \ref{rmrrddm}). Indeed, the dimension of the corresponding Hilbert scheme is given \cite[Proposition 5.18, Chapter IX]{GAC2} by $Nd' > N^2 = \dim \mathrm{PGL}(N , \mathbb{C}) + 1$, and hence the deformation of complex structures of $C$ and the $\mathrm{PGL}(N , \mathbb{C})$-action cannot exhaust all projective embeddings of $C$ in $\mathbb{P}^{N-1}$, meaning that there has to be a projectively non-equivalent embedding of $C$ in $\mathbb{P}^{N-1}$.
	
	Similarly, Lemma \ref{lmpreqell} does not hold when the genus $g$ of $C$ is greater than 1 and $N > g-1$, even when the embedding is associated to a complete linear system. The dimension of the corresponding Hilbert scheme is again given \cite[Proposition 5.18, Chapter IX]{GAC2} by
\begin{equation*}
	Nd'-(N-4)(g-1),
\end{equation*}
where we have
\begin{equation*}
	d'=N+(g-1)
\end{equation*}
by Riemann--Roch, when $2g-2<d' \iff N >g-1$. We find
\begin{equation*}
	Nd'-(N-4)(g-1) = N^2+(3g-3)+(g-1) > \dim \mathrm{PGL}(N , \mathbb{C}) + \dim \mathcal{M}_{g}
\end{equation*}
where $\mathcal{M}_g$ is the moduli space of genus $g$ curves. Thus the deformation of complex structures of $C$ and the $\mathrm{PGL}(N , \mathbb{C})$-action cannot exhaust all projective embeddings of $C$ in $\mathbb{P}^{N-1}$. See also \cite[Theorem 2]{BO} and \cite[Remark 2.12]{Cuk} for the case when $g=0$.

\begin{proof}[Proof of Theorem \ref{thmrgd}]
	As $k=h$ follows from Lemma \ref{lmhpmkh}, it suffices to prove the latter claim. We have $f'=\sigma \circ f \circ \alpha$ for some $\sigma \in \mathrm{PGL}(N , \mathbb{C})$ with $\alpha$ being a holomorphic isometry of $f_{h}'^* g_{\mathrm{FS}}$, by the hypothesis. We prove that $\sigma$ is unitary (up to scalar multiples); indeed, recalling that $f_k$ is constructed as a line in the $k$-plane generated by the jets of $f$ that is orthogonal to the $(k-1)$-plane, a projective unitary equivalence $\sigma \in \mathrm{PU}(N)$ between $f \circ \alpha$ and $f'$ gives the required relationship $f'_h = f'_k =\sigma \circ f_k \circ \alpha$, and moreover $f'_l = \sigma \circ f_l \circ \alpha$ for all $l=1, \dots , N-2$.

	Recall that the hypothesis of the theorem implies that $f^{\langle k \rangle}$ is isometric to $f'^{\langle k \rangle}$ by \cite[(1.7)]{Cuk}. Moreover, $f'^{\langle k \rangle} \circ \alpha^{-1}$ is isometric to $f'^{\langle k \rangle}$ (and hence to $f^{\langle k \rangle}$) since $\alpha$ is a holomorphic isometry of $f_{k}'^* g_{\mathrm{FS}} = f_{h}'^* g_{\mathrm{FS}}$. Thus, by Calabi's rigidity there exists $\tilde{\sigma} \in \mathrm{PU}(L_{k,N})$ such that $f'^{\langle k \rangle} \circ \alpha^{-1}= \tilde{\sigma} \circ f^{\langle k \rangle}$. Let $\rho (\sigma)$ be the action of $\sigma$ naturally induced on $\bigwedge^{k} \mathbb{C}^{N} \otimes \bigwedge^{k+1} \mathbb{C}^{N}$. We have $f'^{\langle k \rangle} = \rho (\sigma) \circ f^{\langle k \rangle} \circ \alpha$, and hence $\tilde{\sigma}^{-1} \rho (\sigma)$ satisfies
	\begin{equation} \label{eqexeqstr}
		( \tilde{\sigma}^{-1} \rho (\sigma) ) \circ f^{\langle k \rangle} (x) = f^{\langle k \rangle} (x)
	\end{equation}
	for all $x \in C$. Since $f$ has no special hyperosculation points by Lemma \ref{lmhposc}, $f^{\langle k \rangle} : C \to \mathbb{P}^{L_{k,N}-1}$ is a holomorphic map associated to a linear series of dimension $L_{k,N}$ on $C$ as explained in \cite[(2.8)]{Cuk}, which implies that $f^{\langle k \rangle} (C)$ is not contained in any proper linear subspace of $\mathbb{P}^{L_{k,N}-1}$; this fact also follows from \cite[(4.1)]{Cuk}, by noting that the kernel of the higher Gauss map is trivial by Lemma \ref{lmhposc} and the Pl\"ucker and Segre embeddings are non-degenerate. Thus, the equation (\ref{eqexeqstr}) forces $\tilde{\sigma}^{-1} \rho (\sigma)$ to be a constant multiple of the identity matrix. Hence $\rho ( \sigma )$ can be represented by a projective unitary matrix, and \cite[Proposition 1.11]{Cuk} immediately implies that $\sigma$ itself is unitary (up to scalar multiples).%(as $t_k$ in \cite[(2.2)]{Cuk} is injective and $H^0$ is left exact)
\end{proof}
We can drop the assumptions on the degree of $f,f'$ and the automorphism $\alpha$, provided we make the additional assumptions that they have no special hyperosculation points and also that the pullbacks of the Fubini--Study symplectic form $\omega_{\mathrm{FS}}$ on $\mathbb{P}^{N-1}$ coincide. 
\begin{theorem}\label{th: rigidity of harmonic maps under no hyperosculation pts}
Let $C$ be an elliptic curve, $k,h=1, \dots , N-2$, and let $f_k, f'_h : C \to \mathbb{P}^{N-1}$ be two isotropic harmonic maps of degree $d>0$ constructed from non-degenerate holomorphic embeddings $f, f' : C \to \mathbb{P}^{N-1}$ without special hyperosculation points.
Suppose that $f_k$ is isometric to $f'_h$ and that $f^*_k\omega_{\mathrm{FS}}=f_{h}'^*\omega_{\mathrm{FS}}$. Then $k=h$, and there exists $\sigma \in \mathrm{PU}(N)$ acting linearly on $\mathbb{P}^{N-1}$ such that $f'_h = \sigma \circ f_k$, and moreover $f'_l = \sigma \circ f_l$ holds for all $l=1, \dots , N-2$.    
\end{theorem}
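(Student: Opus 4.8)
\section*{Proof proposal for Theorem~\ref{th: rigidity of harmonic maps under no hyperosculation pts}}

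The plan is to follow the proof of Theorem~\ref{thmrgd} almost verbatim, replacing its one ingredient that genuinely needed the degree hypothesis---Lemma~\ref{lmpreqell}---by an argument that instead extracts from the extra hypothesis $f_k^*\omega_{\mathrm{FS}}=f_h'^*\omega_{\mathrm{FS}}$ the stronger conclusion that $f$ and $f'$ are \emph{isometric} as holomorphic maps. Once that is known, Calabi's rigidity produces directly a $\sigma\in U(N)$ acting linearly on $\mathbb{P}^{N-1}$ with $f'=\sigma\circ f$; since $\sigma$ is unitary it commutes with the formation of osculating planes and of orthogonal complements, hence intertwines the construction~\eqref{dfeqfkhmmp}, so $f'_l=\sigma\circ f_l$ for all $l=1,\dots,N-2$ and in particular $f'_h=f'_k=\sigma\circ f_k$. (This is cleaner than, and bypasses, the endgame of the proof of Theorem~\ref{thmrgd}, which produced only an a priori non-unitary projective equivalence and then had to upgrade it through $f^{\langle k\rangle}$.) The equality $k=h$ is obtained exactly as in the proof of Lemma~\ref{lmhpmkh}: by hypothesis $f,f'$ have no special hyperosculation points, so $f^{\langle k\rangle}$ and $f'^{\langle h\rangle}$ are isometric by \cite[(1.7)]{Cuk} and hence projectively equivalent by Calabi's rigidity, and then Cukierman's \cite[Proposition 2.9]{Cuk}---which requires precisely this hypothesis---forces $k=h$ as in \cite[Corollary 2.11]{Cuk}.

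The heart of the argument is then to show $f^*\omega_{\mathrm{FS}}=f'^*\omega_{\mathrm{FS}}$, and here I would use the fact that the quantum geometry of the whole harmonic tower is determined by that of a single non-degenerate band (this is the content, read in a suitable direction, of \cite{liu:mera:fujimoto:ozawa:2025}; see also \cite{EW}). Attached to $f$ one has the harmonic sequence $f=f_0,f_1,\dots,f_{N-1}$, with $f_{N-1}$ antiholomorphic; write $\delta_j\ge 0$ for the $\partial'$-energy density of $f_j$ (the squared norm of its first fundamental form, measured in a flat coordinate on $C$), with the convention $\delta_{-1}=0$. Because $f$ has no special hyperosculation points, the $j$th associated map is an immersion for $j=0,\dots,N-2$ (cf.\ Lemma~\ref{lmhposc}), so $\delta_j>0$ for $j=0,\dots,N-2$. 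The structure/Plücker relations for isotropic harmonic maps give, in one consistent normalization, that the $(1,1)$-part of $f_j^*g_{\mathrm{FS}}$ has density $\delta_j+\delta_{j-1}$, that $f_j^*\omega_{\mathrm{FS}}$ has density $\delta_j-\delta_{j-1}$, and a second-order recursion (the $2$D Toda lattice equations for the harmonic sequence) expressing, at each interior level $j$, $\delta_{j-1}$ as a fixed differential-polynomial expression in $\delta_j,\delta_{j+1}$ and $\Delta\log\delta_j$, where $\Delta$ is the flat Laplacian on $C$. Applying the first two identities to the hypotheses $f_k^*g_{\mathrm{FS}}=f_h'^*g_{\mathrm{FS}}$ and $f_k^*\omega_{\mathrm{FS}}=f_h'^*\omega_{\mathrm{FS}}$, with $k=h$, yields the pointwise equalities $\delta_k=\delta_k'$ and $\delta_{k-1}=\delta_{k-1}'$ on $C$. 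Marching the recursion downward---using level $j=k-1$ to obtain $\delta_{k-2}$, then level $j=k-2$ to obtain $\delta_{k-3}$, and so on down to level $j=1$ to obtain $\delta_0$---is legitimate since every $\delta_j$ with $0\le j\le N-2$ is nowhere zero, and the levels used are all interior; it gives $\delta_j=\delta_j'$ for $j=0,\dots,k$. In particular $\delta_0=\delta_0'$, and since $\delta_{-1}=0$ this says exactly $f^*\omega_{\mathrm{FS}}=f'^*\omega_{\mathrm{FS}}$.

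To finish: $f$ and $f'$ are holomorphic for the same complex structure $J$ on $C$, so $f^*g_{\mathrm{FS}}=f^*\omega_{\mathrm{FS}}(\cdot,J\cdot)=f'^*\omega_{\mathrm{FS}}(\cdot,J\cdot)=f'^*g_{\mathrm{FS}}$; thus $f$ is isometric to $f'$ and the first paragraph applies, producing $\sigma\in U(N)$ with $f'_l=\sigma\circ f_l$ for all $l$, in particular $f'_h=\sigma\circ f_k$.

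The main obstacle, and the only non-formal part, is the bookkeeping in the middle paragraph: one must pin down the normalization-independent forms of the Plücker/Toda identities relating $f_j^*g_{\mathrm{FS}}$, $f_j^*\omega_{\mathrm{FS}}$ and $\Delta\log\delta_j$---being careful to isolate the $(1,1)$-part of $f_j^*g_{\mathrm{FS}}$ correctly (its Hopf-differential part is a holomorphic quadratic differential, hence a constant on the elliptic curve $C$, and is irrelevant)---and to verify that the downward recursion is solvable at every step. This is exactly where the no-special-hyperosculation hypothesis is used: it keeps each $\delta_j$ ($j\le N-2$) nowhere vanishing, so $\log\delta_j$ is defined and the recursion never degenerates; note also that the (more delicate) boundary behaviour at the antiholomorphic end $j=N-1$ is never invoked, since only interior levels enter. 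All of this is classical in the theory of the harmonic sequence into $\mathbb{CP}^{N-1}$, and once it is in place the argument above goes through.
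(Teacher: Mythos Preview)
Your proposal is correct and follows essentially the same route as the paper's proof: the paper also first obtains $k=h$ via Cukierman and Calabi, then extracts from the metric and symplectic hypotheses the two consecutive ``levels'' $\omega^{(k-1)}=\omega'^{(k-1)}$, $\omega^{(k)}=\omega'^{(k)}$ (your $\delta_{k-1},\delta_k$ are exactly the densities of these forms), and then runs the second-order recursion---phrased there as Griffiths' second main theorem $\Ric{\omega^{(j)}}=\omega^{(j+1)}-2\omega^{(j)}+\omega^{(j-1)}$, which is your Toda relation---down to level $0$ to conclude $f^*\omega_{\mathrm{FS}}=f'^*\omega_{\mathrm{FS}}$ and finish with Calabi. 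The only cosmetic differences are that the paper works with the $2$-forms $\omega^{(j)}$ rather than their densities and runs the recursion in both directions (only the downward half is actually needed), and it reads off the ``sum'' equation from $\omega^{\langle k\rangle}=\omega^{(k)}+\omega^{(k-1)}$ via \cite[(1.7)]{Cuk} rather than from the $(1,1)$-part of $f_k^*g_{\mathrm{FS}}$ directly.
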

\begin{proof}
We first note that $k=h$. Since $f_k$ being isometric to $f'_h$ is equivalent to $f^{\langle k\rangle}$ being isometric to $f^{'\langle h\rangle}$,~\cite[(1.7)]{Cuk} , it follows from the assumption in the theorem that the latter two maps are isometric. Calabi's rigidity then implies that the two maps define isomorphic linear series and hence applying~\cite[Proposition (2.9)]{Cuk} with the assumption that there are no special hyperosculation points, we get the result that $k=h$.

Next, observe that if we prove that $f$ and $f'$ determine the same K\"ahler metric on $C$, it follows by Calabi's rigidity theorem that $f'=\sigma \circ f$, with $\sigma \in PU(N)$. Because $f_k$ and $f'_h=f'_k$ are, respectively, the lines in the planes generated by the $k$-jets of $f$ and $f'$ which are orthogonal to the $(k-1)$-planes generated by the $(k-1)$-jets of $f$ and $f'$, it follows that such $\sigma$ will yield the required projective unitary. 

Writing $\omega_{\mathrm{FS}, \langle j\rangle}$ for the Fubini--Study symplectic form on $\mathbb{P}^{L_{j,N}-1}$, we denote by $\omega^{\langle j\rangle}=(f^{\langle j\rangle} )^{*}\omega_{\mathrm{FS}, \langle j\rangle}$ and $\omega_{j}=f_j^{*}\omega_{\mathrm{FS}}$, for $j=0,\dots N-2$, and similarly for the primed maps. The assumptions in the theorem, together with the discussion in the beginning of the proof implies
\begin{align}
\omega^{\langle k\rangle} &=(f^{\langle k\rangle} )^{*}\omega_{\mathrm{FS}, \langle k \rangle}=(f^{'\langle k\rangle} )^{*}\omega_{\mathrm{FS}, \langle k \rangle}=\omega^{'\langle k\rangle}, \nonumber\\
\omega_{k} &=f_{k}^{*}\omega_{\mathrm{FS}}=f_{k}'^{*}\omega_{\mathrm{FS}}=\omega'_{k}
\end{align}
and, furthermore, denoting $g_k=\omega^{\langle k\rangle}(\cdot, j\cdot)$, for $j$ the complex structure of $C$, we have
\begin{align}
g_k=f_k^*g_{\mathrm{FS}}=(f'_k)^*g_{FS}.
\end{align}
Using the relation between the maps $f^{\langle j\rangle}$ and $f_j$ and the maps $f^{(j)}$ and $f^{(j-1)}$, we have, see Ref.~\cite[Appendix~B5 and~B6]{liu:mera:fujimoto:ozawa:2025},
\begin{align}
\omega^{\langle j\rangle}=\omega^{(j)}+\omega^{(j-1)} \text{ and } \omega_{j}=\omega^{(j)}-\omega^{(j-1)},
\end{align}
and similarly for the primed quantities, with the understanding that $\omega^{(-1)}=\omega^{'(-1)}=0$ (because $\bigwedge^0\mathbb{C}^{N}=\mathbb{C}$). Note that the above equations also follows from the diagram in Section \ref{sec: background} and the metric preserving $C^{\infty}$-isomorphism $f_j^* \mathcal{O}_{\mathbb{P}^{N-1}} (-1) \cong (f^{(j)})^*U_j / (f^{(j-1)})^*U_{j-1}$ discussed therein, since $\omega^{(j)}-\omega^{(j-1)}$ is the curvature form of the metric connection on the dual of this quotient bundle. Observe that, under the hypothesis of the theorem, we can extract $\omega^{(k-1)}$ and $\omega^{(k)}$ and they coincide with $\omega^{'(k-1)}$ and $\omega^{'(k)}$, respectively.

We then appeal to the second main theorem of holomorphic curves~\cite[Eq.~(4.10)]{griffiths:1974}, which in our notation reads
\begin{align}
\Ric{\omega^{(j)}}=\omega^{(j+1)}-2\omega^{(j)}+\omega^{(j-1)},\ j=0,\dots,N-2,
\end{align}
for the map $f$. Similar equation holds for the map $f'$ with primed quantities. Observe that the equations above are well-defined as equations amongst smooth $2$-forms due to the assumption that $f$ and $f'$ have no special hyperosculation points. That assumption implies that the induced maps $f^{(j)}$ and $f^{'(j)}$, $j=0,\dots, N-2$ are holomorphic immersions themselves by Lemma \ref{lmhposc}, meaning that $\omega^{(j)}$ and $\omega^{'(j)}$ are \emph{bona fide} symplectic forms, $j=0,\dots, N-2$.  In the above equations, we also take, for $j=N-1$, $\omega^{(N-1)}=\omega^{'(N-1)}=0$ (because $\bigwedge^N\mathbb{C}^{N}=\mathbb{C}$). Observe that the above are recurrence equation of order $2$, which need two initial conditions to be solved and there is no guarantee that the solution exists or is unique. In the setup of~\cite{griffiths:1974}, the initial conditions are $\omega^{(-1)}$ and $\omega^{(0)}$ and the solution exists and is unique. Indeed, in that case we have $\omega^{(-1)}=0$ (convention) and $\omega^{(0)}=f^*\omega_{\mathrm{FS}}$, this gives, for $j=0$,
\begin{align}
\Ric{\omega^{(0)}}=\omega^{(1)}-2\omega^{(0)},
\end{align}
from which we can extract $\omega^{(1)}$. For $j=1$, we have
\begin{align}
\Ric{\omega^{(1)}}=\omega^{(2)}-2\omega^{(1)}+\omega^{(0)},
\end{align}
from which we can extract $\omega^{(2)}$. We can proceed by induction, and the process terminates at $j=N-2$, with the equation
\begin{align}
\Ric{\omega^{(N-2)}}=-2\omega^{(N-2)}+\omega^{(N-3)}.
\end{align}
This means that, in the case $f$ has no special hyperosculation points, from the knowledge of $\omega^{(0)}$, we can extract all the other $\omega^{(j)}$'s uniquely. The same occurs for $f'$ and the $\omega^{'(j)}$'s. In the hypothesis of our theorem, as discussed above, we are given $\omega^{(k-1)}=\omega'^{(k-1)}$ and $\omega^{(k)}=\omega'^{(k)}$. We now show that with this information, the recurrence relations of the second main fundamental theorem of holomorphic curves can be solved and the solution is also unique. To see this, observe that the equation for $j=k$ yields
\begin{align}
\Ric{\omega^{(k)}}=\omega^{(k+1)}-2\omega^{(k)}+\omega^{(k-1)},
\end{align}
from which we can extract $\omega^{(k+1)}$, and the equation for $j=k-1$ yields
\begin{align}
\Ric{\omega^{(k-1)}}=\omega^{(k)}-2\omega^{(k-1)}+\omega^{(k-2)},
\end{align}
from which we can extract $\omega^{(k-2)}$. We can now go up in the recursion relation because for the $j=(k+1)$th step we have the data of $\omega^{(k)}$ and $\omega^{(k+1)}$ and we can also go down in the recursion relation because for the $j=(k-2)$th step we have $\omega^{(k-2)}$ and $\omega^{(k-1)}$. This then ensures that the induction process terminates, giving a unique solution for the sequence of $2$-forms $\omega^{(j)}$. Because the initial conditions are the same and the solution is unique, the resulting sequence of $2$-forms is also the same and, in particular, it follows that $\omega^{(0)}=\omega^{'(0)}$ which then implies that $f$ and $f'$ determine the same K\"ahler metric (since $f^{(0)}=f$ and $f^{'(0)}=f'$) and hence finishes the proof of the theorem. 
\end{proof}

\section*{Acknowledgments}
\begin{sloppypar}
YH is supported by JSPS KAKENHI Grant Number JP23K03120 and JP24K00524, and thanks Professor Yoshihiro Ohnita for helpful discussions. B.~M. acknowledges support from the Security and Quantum Information Group (SQIG) in Instituto de Telecomunica\c{c}\~{o}es, Lisbon. This work is funded by FCT/MECI through national funds and when applicable co-funded EU funds under UID/50008: Instituto de Telecomunicações (IT). B.~M. further acknowledges the Scientific Employment Stimulus --- Individual Call (CEEC Individual) --- 2022.05522.CEECIND/CP1716/CT0001, with DOI: \href{https://doi.org/10.54499/2022.05522.CEECIND/CP1716/CT0001}{10.54499/2022.05522.CEECIND/CP1716/CT0001}. TO is supported by JSPS KAKENHI Grant Number JP24K00548 and JST PRESTO Grant No. JPMJPR2353.
\end{sloppypar}
\bibliographystyle{JHEP.bst} %Added by YH
\bibliography{harmonic.bib}
\end{document}